\newtheorem{thm}{Theorem}[section]
\newtheorem{defn}[thm]{Definition}
\newtheorem*{remark}{Remark}
\renewcommand{\SetKwInOut}[2]{%
  \sbox\algocf@inoutbox{\KwSty{#2}\algocf@typo:}%
  \expandafter\ifx\csname InOutSizeDefined\endcsname\relax
    \newcommand\InOutSizeDefined{}\setlength{\inoutsize}{\wd\algocf@inoutbox}%
    \sbox\algocf@inoutbox{\parbox[t]{\inoutsize}{\KwSty{#2}\algocf@typo:\hfill}~}\setlength{\inoutindent}{\wd\algocf@inoutbox}%
  \else
    \ifdim\wd\algocf@inoutbox>\inoutsize%
    \setlength{\inoutsize}{\wd\algocf@inoutbox}%
    \sbox\algocf@inoutbox{\parbox[t]{\inoutsize}{\KwSty{#2}\algocf@typo:\hfill}~}\setlength{\inoutindent}{\wd\algocf@inoutbox}%
    \fi%
  \fi
  \algocf@newcommand{#1}[1]{%
    \ifthenelse{\boolean{algocf@inoutnumbered}}{\relax}{\everypar={\relax}}%
    {\let\\\algocf@newinout\hangindent=\inoutindent\hangafter=1\parbox[t]{\inoutsize}{\KwSty{#2}\algocf@typo:\hfill}~##1\par}%
    \algocf@linesnumbered
  }}%
\definecolor{mdgrey}{rgb}{0.8, 0.8, 0.8}
\newtheoremstyle{defi}
  {\topsep}%
  {\topsep}%
  {\normalfont}%
  {}%
  {\bfseries}%
  {:}%
  {.5em}%
  {\thmname{#1}\thmnote{~(#3)}}%
\theoremstyle{defi}
\newmdtheoremenv{definitioni}{Definition}
\newmdtheoremenv[
hidealllines=true,
leftline=true,
innertopmargin=0pt,
innerbottommargin=0pt,
linewidth=4pt,
linecolor=gray!40,
innerrightmargin=0pt,
]{definitionii}{Definition}
\newmdtheoremenv[
roundcorner=5pt,
innertopmargin=0pt,
innerbottommargin=5pt,
linewidth=4pt,
linecolor=gray!40,
]{definitioniii}{Definition}
\title{Is your vote truly secret?\\ Ballot Secrecy iff Ballot Independence: Proving necessary conditions and analysing case studies}
\author{Joe Bloggs}
\begin{document}
\begin{titlepage}

\newcommand{\HRule}{\rule{\linewidth}{0.5mm}} 


\includegraphics[width=8cm]{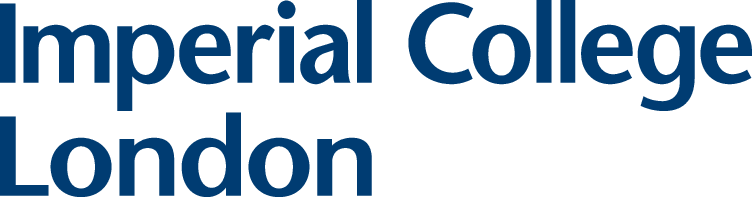}\\[1cm] 
 

\center 


\textsc{\LARGE MEng Individual Project}\\[1.5cm] 
\textsc{\Large Imperial College London}\\[0.5cm] 
\textsc{\large Dyson School of Design Engineering}\\[0.5cm] 

\makeatletter
\HRule \\[0.6cm]
{ \huge \bfseries \@title}\\[0.6cm] 
\HRule \\[1.5cm]
 

\begin{flushleft} \large
\emph{Author:}\\
Aida Maria Manzano Kharman \\[1.2em] 
\end{flushleft}

~
\begin{flushleft} \large
\emph{Supervisors:} \\

Dr. Freddie Page\\ 
\textit{Dyson School of Design Engineering, University of Imperial College London, UK.}\\[1.2em]

Dr. Ben Smyth\\ 
\textit{School of Computer Science, University of Birmingham, UK.}\\[1.2em]

\end{flushleft}
\BlankLine
\BlankLine
\makeatother



{\large \today}\\[1cm] 
\textbf{Word Count}
5000 words

\end{titlepage}

\begin{abstract}
We formalise definitions of ballot secrecy and ballot independence by Smyth, JCS'21 as indistinguishability games in the computational model of security. These definitions improve upon Smyth, draft '21 to consider a wider class of voting systems. Both Smyth, JCS'21 and Smyth, draft '21 improve on earlier works (such as: \cite{29, 71, 28, 96, 97}) by considering a more realistic adversary model wherein they have access to the ballot collection. We prove that ballot secrecy implies ballot independence. We say ballot independence holds if a system has non-malleable ballots. We construct games for ballot secrecy and non-malleability and show that voting schemes with malleable ballots do not preserve ballot secrecy. We demonstrate that Helios does not satisfy our definition of ballot secrecy. Furthermore, the Python framework we constructed for our case study shows that if an attack exists against non-malleability, this attack can be used to break ballot secrecy.
\end{abstract}
\renewcommand{\abstractname}{Acknowledgements}
\begin{abstract}
This work would have not been possible without the kind supervision of Dr Freddie Page, who guided me through unknown paths of varying research topics and embarked on a journey of discovery with me. I would like to thank him for always asking thought provoking questions and transmitting his excitement about learning and research to my project.

I would also like to thank my parents, who have supported me endlessly and lovingly throughout the ups and downs of not only this project, but my entire life. Their unconditional love has kept me going even when I was exhausted. I would not be where I am today without them. Both inspire me to be curious and ambitious, and value learning, research and hard work. Os quiero mucho.

I would also like to thank my dear friend Barty, who has been my moral and technical support throughout the entirety of this year. Thank you for always making me smile, for the laughter and good times we've had, and for being the closest thing to a human version of Stack Overflow that I know. 

Finally, I would like to wholeheartedly thank Dr Ben Smyth. I stumbled across Ben's work online, and he was one of the few researchers that actively answered my questions and showed genuine interest, when he had no reason to do so. Ben took me on as a novice, inexperienced student in the field of cryptography and for reasons that are still unknown to me, has exhibited superhuman levels of patience and kindness towards me. This project would not exist without him, he has been pivotal in everything that I present here, and I am incredibly thankful and humbled by his guidance and support. He has pushed me to challenge myself one step further every time, encouraged me to keep going and has also been brutally honest with me when necessary. This has been the most challenging yet satisfying project I have embarked on as an undergraduate, and simply taking the time to engage with his research has made me gape in awe at the quality of his work. All I can say is thank you for sharing your knowledge with me and helping make this project a reality.  

\end{abstract}

\tableofcontents


\chapter{Introduction}
\newcommand{\BS}{\fontfamily{lmss}\selectfont Ballot Secrecy}
\newcommand{\BSA}{\fontfamily{lmss}\selectfont BS-A}
\newcommand{\BSC}{\fontfamily{lmss}\selectfont BS-C}
\newcommand{\NM}{\fontfamily{lmss}\selectfont Non-Malleability}
\newcommand{\ES}{\fontfamily{lmss}\selectfont Election Scheme}
\newcommand{\NMB}{\fontfamily{lmss}\selectfont Non-Malleable Ballots}

\section{Background}
\label{intro}

\begin{definitionii}
\textbf{\emph{Democracy}}

\noindent \emph{A government in which the supreme power is vested in the people and exercised by them directly or indirectly through a system of representation usually involving periodically held free elections.}
 
\hfill -- Merriam Webster English Dictionary.
\end{definitionii}

Elections are the cornerstone of democracy, they're the expression of people's will and the outcomes determine the members of the legislative power. It is here that the people's voice decides who will rule them. 
Needless to say, the stakes are high. Elections must be fair and free: freedom enables expression of free will and fairness means that election outcomes correctly express voters' votes. However, in the words of \cite{benaloh2015endtoend}: \begin{quote} "Getting the election outcome right isn’t good enough.  Voters deserve convincing evidence that the outcome is correct." \end{quote}

Ensuring that everyone who voted was a member of the electorate and that the results are correct is straightforward when the voting is public: everyone present can verify the results. However, this comes at the cost of voter privacy. Voters can be threatened or bribed to vote a specific way when their vote is public. Because of this, the privacy of vote is recognized as a human right as per Article 21 of the Universal Declaration of Human Rights (\cite{15}). 

The challenge lies in achieving verifiability whilst preserving privacy. Verifiability requires evidence of correct computations, whether this be of casting or recording votes correctly, or of tallying the votes appropriately. End to end (E2E) verifiable voting requires that all three actions are verifiable: casting a vote as intended, recording the vote as cast and tallying the votes as recorded (\cite{benaloh2015endtoend}).

Privacy demands that a voter's vote remains secret. To achieve both E2E verifiability and preserve privacy, a voting scheme must not compromise a voter's privacy whilst convincingly demonstrate to them that their vote was cast as intended, recorded as cast and tallied as recorded. Voters must be able to verify all three steps without being able to prove how they voted to a third party.

Regarding vote privacy, researchers have narrowed down the notion into three crucial properties: (\cite{ali2016overview})
\begin{itemize}
    \item Ballot Secrecy: A voter’s vote is never revealed to anyone.
    \item Receipt-Freeness: A voter cannot prove how they voted to a third party.
    \item Coercion-Resistance: the voter should be able to cast a vote for her intended choice even while appearing to cooperate with a coercer.
\end{itemize}

Coercion resistance implies receipt-freeness, which in turn implies ballot secrecy (\cite{delaune2006coercion}).
\newpage

\section{Purpose}
The project has three goals:
Firstly, mathematically formalise the state-of-the-art definitions in academia of ballot secrecy and ballot independence. Secondly, understand if a relation between the two exists, and if so, mathematically prove or disprove it. Finally, test the results obtained on a case study: do they hold in practical applications?

\section{Structure of the Report}

We explore the academic research on definitions of ballot secrecy and ballot independence and assess the work in the field in Chapter \ref{litreview}.

\noindent We present a generalised definition for both and formalise these definitions using the computational, game-based model of cryptography. Subsequently, we demonstrate that ballot secrecy implies ballot independence, proving that attacks against ballot independence suffice for attacks against ballot secrecy in Chapter \ref{methodology}.

\noindent From known attacks against the seminal Helios voting scheme, we construct a Python framework that showcases how an attack against ballot independence suffices to compromise ballot secrecy in Chapter \ref{results}.

\noindent Finally, we discuss the implications of these results and explore avenues for future work in Chapter \ref{discussion}.

\chapter{Literature Review}
\label{litreview}

\section{Ballot Secrecy}

\cite{92} seminal work kickstarted the discussion of ballot secrecy in academia, whose definition was later developed by \cite{49, 93, 94}. Subsequently, Bernhard et al. consider ballot secrecy definitions for election schemes both with tallying proofs: \cite{25, 26} and without: \cite{27, 28, 29, 71}.

Some definitions compute tallying proofs using the voting scheme’s tallying algorithm: \cite{71}, whereas others use simulators: \cite{29}. Unfortunately, the definitions proposed by Bernhard, Pereira \& Warinschi are too weak, and so is \cite{97}'s subsequent variant of the aforementioned. As shown in \cite{29} section III, these definitions consider voting schemes to be secure which intuitively do not preserve vote privacy. They use simulators for tallying proofs, to task the adversary to distinguish between a real bulletin board and tallying proofs, or a fake bulletin board with simulated tallying proofs. \cite{29} demonstrate that these definitions still consider a system private even if the system reveals exactly how voters voted in the tabulation proof. 

Stronger definitions have been proposed by \cite{27}, which compute tallying proofs using only algorithm tally, but these required weakening. \cite{29} demonstrate that under this definition, any verifiable scheme will be declared not private. Any voting scheme meeting this
definition must allow the authorities to announce any
result that is consistent with the number of votes on the
board. Tally uniqueness and the definition of privacy in \cite{27} are incompatible. This means that the definition cannot be used for schemes that aim to be verifiable.

The innovation that \cite{smyth2021} presents is removing the trust assumption that the communication channels and the bulletin board construction are not compromised. Under \cite{smyth2021}'s definition, the adversary has the power to intercept ballots and construct bulletin boards, thus removing the trust assumptions that previous ballot secrecy definitions relied upon. Prior work focused on identifying vulnerabilities assuming an adversary can coerce a set of voters in the election scheme: \cite{25, 26, 27, 28, 29, 71}. Based on this trust assumption, Smyth \& Bernhard demonstrate that non-malleable ballots are not required for ballot secrecy, however, once this trust assumption is removed, \cite{smyth2021} demonstrates that they are required for ballot secrecy. As such, the latter improves on earlier definitions of ballot secrecy by assuming the adversary is more powerful, and therefore enabling the detection of more vulnerabilities.
\newpage
 
\section{Ballot Independence}

\cite{63} originates the discussion in literature of ballot independence, and the first formalised definition of ballot independence appears in \cite{28} and \cite{27}. The latter mathematically demonstrate that their definitions of ballot secrecy and ballot independence coincide. \cite{smyth2021} also achieves this for his proposed definitions of ballot secrecy and ballot independence. 

Beyond the scope of voting schemes, considerations about independence (\cite{64}) and how the lack thereof can compromise security, have been made in \cite{66}, \cite{67}, \cite{103}, \cite{104}, \cite{105} and \cite{106}. 

\section{Ballot Secrecy and Ballot Independence: What is the link?}
\label{2.3}
There are numerous works that document how the malleability property in encryption schemes can be exploited to compromise ballot secrecy: \cite{98}, \cite{99}, \cite{100} \cite{102}, \cite{smythcortier2011} and \cite{29}. 

\cite{36} introduce the research question as to whether voting systems that allow the casting of meaningfully related ballots uphold ballot secrecy. \cite{107}’s propose a system that achieves this, but \cite{27} disprove this by showing their results do not support their claims (\cite{smyth2021}). \cite{smyth2021} demonstrates that non-malleable ballots are necessary to ensure ballot secrecy, thus answering Bulens, Giry \& Pereira’s research question.

In terms of analysing Helios’ privacy properties, work has been done by \cite{111} showing that Helios variants satisfy earlier definitions of ballot secrecy. As mentioned earlier however, these definitions are too weak, as they fail to detect vulnerabilities in the ballot collection process. \cite{smyth2021} analyses comprehensively the different Helios variants under his newer definitions and demonstrates that the variants with non-malleable ballots (ie: ballot independence) preserve ballot secrecy.

It is worth noting that the existing definitions of ballot secrecy formalise a notion of privacy and thus free will, but this does not mean that they satisfy definitions of verifiability posed earlier. In particular, that ballots are properly formed (ie: cast as intended), recorded (ie:recorded as cast) and correctly tallied (ie: tallied as recorded). However, to ensure free will, one must also consider receipt freeness: \cite{47, delaune2006coercion, 126, 127, 128}, and coercion resistance: \cite{129, 130, 131, 132, 133}.

\BlankLine
\BlankLine

We acknowledge \cite{smyth2021} extensive review of related work in section 8 that informed this literature review.

\chapter{Methodology}
\label{methodology}

\section{Preliminaries: Games and Notation}
It is convention in cryptographic security proofs to construct games that capture security notions as a means to formalise them. These games often entail a series of information exchanges between an honest and a malicious actor. We refer the latter as the adversary, where the adversary is tasked with completing some particular action. If they succeed, they win the game and the security notion is violated. We aid ourselves from the work in \cite{cryptoeprint:2004:332} to understand the techniques used to prove equivalence between two games, and follow the same preliminaries set up in \cite{smyth2021} section 2. After all, we are constructing games for newer versions of the definitions first presented in \cite{smyth2021}'s work, so we preserve the syntax they introduce. 

\begin{itemize}
    \item A game captures a series of information exchanges between an honest actor (an election administrator in this case), a cryptographic scheme (the voting system) and an adversary.
    \item To win, the adversary must perform a task that compromises the security of the cryptographic scheme. 
    \item The tasks an adversary must perform demonstrate \textit{indistinguishability} or \textit{reachability}. For example, {\BS} holds if an adversary cannot distinguish between two distinct voting scenarios that they have observed.
    \item The games are probablistic and output booleans. We consider computational security as opposed to information-theoretic security. Attacks by adversaries in non-polynomial time and with negligible success are not considered, since such breaks are infeasible in practice.
    \item We let $\mathnormal{x} \gets \mathnormal{Y}$ denote the assignment of $\mathnormal{Y}$ to $\mathnormal{x}$, and $x \gets _R\{0, 1\}$ denote the assignment of $\mathnormal{x}$ to an element of the set $\{0,1\}$ chosen uniformly at random. 
    \item Adversaries are \textit{stateful}, meaning they have access to information that they receive throughout the entirety of the game. \item Adversaries win a game by causing it to output True ($\top$).
    \item We denote the probability of an adversary winning the game {\fontfamily{lmss}\selectfont Game} as \textnormal{\fontfamily{lmss}\selectfont Succ(Game)}.

\end{itemize}
\newpage
\section{Election Scheme Syntax}
We construct definitions of ballot secrecy and ballot independence for voting schemes that perform the following operations: 
The administrator generates a keypair, each voter constructs a ballot for their vote, the administrator tallies the set of cast ballots and recovers the election outcome from the aforementioned tally. 
It is necessary to formalise these voting systems in way that allows us to model them in cryptographic games for ballot secrecy and ballot independence. As such, we define an election scheme below:
\begin{defn}
\label{es}
An \textnormal{\ES} is a tuple of probabilistic polynomial-time algorithms\\ $(\mathrm{Setup, Vote, Partial-Tally, Recover})$ such that:

\noindent $\mathrm{Setup}$ denoted $(pk, sk) \gets \mathrm{Setup}(k)$ is run by the administrator. The algorithm takes security parameter $k$ as an input, and returns public key $pk$ and private key $sk$.\;
\BlankLine
\noindent $\mathrm{Vote}$ denoted $b \gets \mathrm{Vote}(pk, v, k)$ is run by the voters. The algorithm takes public key $pk$, the voter's vote $v$ and security parameter $k$ as inputs and returns a ballot $b$, or an error ($\perp$).\;
\BlankLine
\noindent $\mathrm{Partial-Tally}$ denoted $e \gets \mathrm{Partial-Tally}(sk, \mathfrak{bb}, k)$ is run by the administrator. The algorithm takes secret key $sk$, bulletin board $\mathfrak{bb}$, and security parameter $k$ as inputs and returns evidence $e$ of a computed partial tally.\;
\BlankLine
\noindent $\mathrm{Recover}$ denoted $\mathfrak{v} \gets \mathrm{Recover}(\mathfrak{bb}, e, pk)$ is run by the administrator. The algorithm takes bulletin board $\mathfrak{bb}$, evidence $e$ and public key $pk$ as inputs and returns the election outcome $\mathfrak{v}$.
\end{defn}

\section{Formalising Privacy Notions}
We introduce the informal definitions that \cite{votetech} presents and subsequently provide the formalised definitions that we constructed.

\subsection{Ballot Secrecy}
\cite{votetech} informally defines ballot secrecy as follows:
\begin{quote}
A voting system guarantees ballot secrecy if an
adversary corrupting all but one administrator cannot determine the value of
bit $\beta$ from a bulletin board they compute using voters’ ballots — each ballot expressing vote $v_\beta$, computed upon the adversary’s demand for a ballot expressing vote $v_0$ or $v_1$ — along with the non-corrupt administrator’s public key share
and evidence, wherein for bulletin-board ballots computed on demand with inputs
$(v_{1,0}, v_{1,1}), . . . ,(v_{n,0}, v_{n,1})$, we have $v_{1,0}, . . . , v_{n,0}$ is a permutation of $v_{1,1}, . . . , v_{n,1}$.

\end{quote}

From the definition above, we introduce a game {\BS} (below) between a challenger and an adversary that proceeds as follows: the challenger generates a key pair (Line 1), and flips a coin (Line 2). A set is initialised to keep track of the challenge ballots ($b$) (Line 3). The adversary computes a bulletin board ($\mathfrak{bb}$) from ballots expressing non-corrupt voters' vote (Line 4), where $\mathcal{A^O}$ denotes adversary $\mathcal{A}$'s access to Oracle $\mathcal{O}$. Upon the adversary's demand, the oracle $\mathcal{O}$ constructs a ballot expressing a vote $v_0$  or $v_1$, where $v_0, v_1$ are two distinct voting instances. Both must belong to the set $V$, of all the candidates a voter can vote for. The challenge ballot is then added to the set $L$. The adversary can also include ballots they construct (representing corrupt voters' votes) on the bulletin board. The challenger then performs a partial tally of the bulletin board using their private key and outputs the resulting evidence ($e$) (Line 5). The adversary must then determine the value of the coin (bit $\beta$) from all public outputs (Line 6).
\newpage
\begin{defn}[\textbf{\BS}]
Let $\Gamma$ = $(\mathrm{Setup}, \mathrm{Vote}, \mathrm{Partial-Tally}, \mathrm{Recover})$ be an election scheme, $\mathcal{A}$ be an adversary, $V$ be the set of all possible candidates, $k$ be a security parameter and \textnormal{\BS} the following game:

\LinesNumbered
\begin{algorithm*}[H]

\SetKw{Return}{return}

$\mathnormal{pk, sk} \gets \mathrm{Setup}(k) $\;
$\mathnormal{\beta}$ $\gets$ $\mathnormal{_R\{0, 1\}}$\;
$L \gets \emptyset$\;
$\mathfrak{bb}$ $\gets$ $\mathcal{A^O}(pk,k)$\;
$\mathnormal{e}$ $\gets$ $\mathrm{Partial-Tally}(sk, \mathfrak{bb}, k)$\;
$\mathnormal{g}$ $\gets$ $\mathcal{A}(\mathfrak{bb}, e)$\;

\BlankLine
\Return $g = \beta$ $\wedge$ $balanced(\mathfrak{bb}, L)$ $\wedge$ $v_0, v_1 \in V$\;
\BlankLine

\caption{Ballot Secrecy($\Gamma$, $\mathcal{A}$, $V$, $k$)}
\end{algorithm*}

\label{algorithm:posit}

Predicate $balanced$ and oracle $\mathcal{O}$ are defined as follows:

\begin{itemize}
    \item Predicate $balanced(\mathfrak{bb}, V, L)$ holds if for all votes $v \in V$ we have $| \{ b \mid b \in \mathfrak{bb} \wedge \exists v_1 . (b, v, v_1) \in L \}| = | \{ b \mid b \in \mathfrak{bb} \wedge \exists v_0 . (b, v_0, v) \in L \}|$.
    \item Oracle $\mathcal{O}(v_0, v_1)$ computes $b$ $\gets$ $\mathrm{Vote}(pk, v_\beta, k)$; $L \gets L \cup \{(b, v_0, v_1)\}$
    and outputs $b$, where $v_0, v_1 \in V$.
\end{itemize}

We say $\Gamma$ satisfies \textnormal{\BS} if for all probabilistic polynomial-time adversaries $\mathcal{A}$ and all sets of candidates $V$, there exists a negligible function {\fontfamily{lmss}\selectfont negl}, such that for all security parameters $k$, we have:
\textnormal{\fontfamily{lmss}\selectfont Succ(Ballot Secrecy($\Gamma$, $\mathcal{A}$, $V$, $k$)) $\leq$ $\frac{1}{2} + $ {\fontfamily{lmss}\selectfont negl}($k$)}.
\end{defn}

This definition tasks the adversary with distinguishing between two distinct voting scenarios that they have observed, where one voting scenario is a permutation of the other. If the adversary fails to distinguish them from the information they have access to, then {\BS} is preserved. Intuitively, if the adversary can win the game, then there exists a way of distinguishing ballots. 

To avoid trivial distinctions being considered privacy breaches, we introduce the $balanced$ condition. This states that the number of challenge ballots for $v_0$ in the bulletin board must be the same as the number of challenge ballots for $v_1$, where $v_0$ and $v_1$ are the left and right inputs to the oracle respectively, and are provided to the oracle by the adversary. Without this condition, the adversary could query the oracle with inputs $(2,1)$ and compute a bulletin board with the outputted ballots. From the tally of the bulletin board, the adversary could easily deduce the bit $\beta$.

For example: if $v_0$ represents a vote for Labour and $v_1$ represents a vote for Conservative, the adversary constructs a bulletin board and adds two adversary constructed ballots: one for Labour and one for Conservative. Then the adversary queries the Oracle: $b \gets \mathcal{O}(Labour, Conservative)$ and receives a ballot $b$ that contains Labour if $\beta = 0$ or Conservative if $\beta = 1$. The adversary adds the challenge ballot to the bulletin board and a tally is computed. The tally returns the following result: Labour 2, Conservative 1. The adversary therefore knows that $\beta = 0$ because the challenge ballot contained a vote for Labour. However, under our definition of {\BS} the adversary loses the game because the bulletin board is not balanced. There is one challenge ballot for Labour but no challenge ballots for Conservative.

\newpage

\subsection{Ballot Independence}
Ballot independence means that a voter cannot compute a meaningfully related ballot by observing another voter's ballot. To do so, the ballots would have to be malleable, meaning one can be transformed into another. We say that a system preserves ballot independence if it has non-malleable ballots.
\cite{votetech} informally defines non-malleable ballots as follows:
\begin{quote}
\label{nm}
A voting system has non-malleable ballots, if an adversary corrupting all but one administrator cannot determine the value of bit $\beta$ from that administrator’s public key and the election result for a bulletin board the adversary computes using a challenge ballot, namely, a ballot expressing vote $v_\beta$, computed upon the adversary’s demand for a ballot expressing vote $v_0$ or $v_1$, wherein the challenge ballot does not appear on the bulletin board and the election result is derived from evidence computed by the non-corrupt administrator and the adversary.

\end{quote}

From definition the definition above we introduce a formal game definition for {\NM} (below) between a challenger and an adversary that proceeds as follows:
Lines 1 and 2 are identical to {\BS}. The adversary returns the two possible votes the challenge ballot can express a vote for (Line 3) and the challenger then constructs a challenge ballot containing a vote $v_\beta$ (Line 4). The adversary computes a bulletin board whilst having access to the challenge ballot and the public key share (Line 5). Line 6 is equivalent to Line 5 in {\BS}. The challenger computes the election outcome ($\mathfrak{v}$) from the bulletin board, the evidence and the public key (Line 7). The outcome is given to the adversary, which must then determine the value of the coin (bit $\beta$) (Line 8), wherein the challenge ballot $\mathnormal{b}$ does not appear in the bulletin board (Line 9).

\begin{defn}[\textbf{\NM}]
Let $\Gamma$ = $(\mathrm{Setup}, \mathrm{Vote}, \mathrm{Partial-Tally}, \mathrm{Recover})$ be an election scheme, $\mathcal{A}$ be an adversary, $V$ be the set of all possible candidates, $k$ be a security parameter and \textnormal{\NM} the following game:

\LinesNumbered
\begin{algorithm*}[H]
\SetAlgoLined
\SetKwInOut{KwInput}{Input}
\SetKwInOut{KwOutput}{Output}
\SetKwInOut{KwPre}{Pre}
\SetKw{Return}{return}

$\mathnormal{pk, sk} \gets \mathrm{Setup}(k) $\;
$\mathnormal{\beta}$ $\gets$ $\mathnormal{_R\{0, 1\}}$\;
$v_0, v_1 \gets \mathcal{A}(pk, k)$\;
$\mathnormal{b}$ $\gets$ $\mathrm{Vote}(pk, v_\beta, k) $ \; 
$\mathfrak{bb}$ $\gets$ $\mathcal{A}(b)$\;
$e \gets \mathrm{Partial-Tally}(sk, \mathfrak{bb}, k)$\;
$\mathfrak{v} \gets \mathrm{Recover}(\mathfrak{bb}, e, pk)$\;
$g \gets \mathcal{A}(\mathfrak{v})$\;

\BlankLine
\Return $g = \beta \wedge b \notin bb \wedge v_0, v_1 \in V$\;
\BlankLine

\caption{Non-Malleability($\Gamma$, $\mathcal{A}$, $V$, k)}
 
\end{algorithm*}
We say $\Gamma$ satisfies \textnormal{\NM} if for all probabilistic polynomial-time adversaries $\mathcal{A}$ and all sets of candidates $V$, there exists a negligible function {\fontfamily{lmss}\selectfont negl}, such that for all security parameters $k$, we have:
\textnormal{\fontfamily{lmss}\selectfont Succ(Non-Malleability($\Gamma$, $\mathcal{A}$, $V$, $k$)) $\leq$ $\frac{1}{2} + $ {\fontfamily{lmss}\selectfont negl}($k$).}

\end{defn}
Game {\NM} is satisfied if the adversary cannot know what vote ($v_0$ or $v_1$) the challenge ballot $b$ is for. To avoid trivial distinctions, we require that the challenge ballot does not appear on the bulletin board. Otherwise the adversary could compute a bulletin board with adversarial ballots, add the challenge ballot to the bulletin board, and from the tally determine what the challenge ballot contained.  
If the ballots are malleable, then the adversary can exploit this property and derive a meaningfully related ballot $b'$ from the challenge ballot $b$. They can then add $b'$ to the bulletin board and from the tally determine what the $b$ contained, given that the adversary knows the relation between the $b$ and $b'$.

In our definition, we task an adversary with transforming a challenge ballot into a meaningfully related ballot. If the adversary succeeds, we know the voting scheme does not satisfy {\NM}.

\subsection{Comparing Ballot Secrecy and Ballot Independence}

We compare our games for {\BS} and {\NM} and we can find the following three main distinctions:
\begin{enumerate}
    \item In {\NM}, the adversary only receives one challenge ballot, whereas in {\BS} the adversary can receive an arbitrary number of challenge ballots through the oracle calls.
    \item The adversary in {\BS} can add both challenge ballots and adversary constructed ballots to the bulletin board that they construct, whereas the adversary in {\NM} can only add adversary constructed ballots.
    \item The bulletin board must be $balanced$ in {\BS}, whereas in {\NM} the bulletin board must not contain any challenge ballots ($b \notin \mathfrak{bb}$). Indeed, if the bulletin board in {\NM} satisfies $b \notin \mathfrak{bb}$, then it will be balanced, since the number of challenge ballots for $v_0$ and for $v_1$ will both be 0.
    \item The adversary in {\BS} has access to the evidence provided by the administrator upon completing the partial tally, whereas the adversary in {\NM} does not. They have access to the election outcome $\mathfrak{v}$.
\end{enumerate}
\chapter{Results}
\label{results}
\section{Proving Ballot Secrecy implies Ballot Independence}
As described in section \ref{2.3}, earlier work suggests a link between ballot secrecy and ballot independence (ie: non-malleability) exists. Having formalised the state-of-the-art definitions for ballot secrecy and non-malleability, we propose theorem \ref{theorem1}. 

\begin{thm}[\BS \ $\Rightarrow$ \NM]
\label{theorem1}
Given an election scheme $\Gamma$ satisfying \textnormal{\BS}, \textnormal{\NM} is satisfied. 
\end{thm}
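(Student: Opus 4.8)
\noindent The plan is to prove the contrapositive by a reduction: assume an adversary $\mathcal{B}$ that wins {\NM} against $\Gamma$ with non-negligible advantage, and build from it an adversary $\mathcal{A}$ that wins {\BS} against $\Gamma$ with at least the same advantage, contradicting the hypothesis. The guiding observation is the one already made when comparing the two games: {\NM} is almost a restriction of {\BS} --- a single challenge ballot instead of an oracle, only adversarial ballots allowed on the board, and the outcome $\mathfrak{v}$ revealed rather than the evidence $e$ --- so $\mathcal{A}$ essentially has to translate between the two interfaces and invoke $\mathcal{B}$ as a subroutine.

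\noindent Concretely, when $\mathcal{A}$ is run as $\mathcal{A}^{\mathcal{O}}(pk,k)$ it would: invoke $\mathcal{B}(pk,k)$ to obtain $v_0,v_1$; make a single oracle query $b \gets \mathcal{O}(v_0,v_1)$; pass $b$ to $\mathcal{B}$ to obtain a bulletin board $\mathfrak{bb}$; and output that same $\mathfrak{bb}$. When $\mathcal{A}$ is then run as $\mathcal{A}(\mathfrak{bb},e)$, it computes $\mathfrak{v} \gets \mathrm{Recover}(\mathfrak{bb},e,pk)$ itself --- it has exactly the three inputs $\mathrm{Recover}$ requires, and it still holds $pk$ because adversaries are stateful --- then feeds $\mathfrak{v}$ to $\mathcal{B}$ and forwards $\mathcal{B}$'s guess $g$ as its own.

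\noindent Next I would argue that the view of $\mathcal{B}$ embedded in $\mathcal{A}$ is distributed exactly as in a genuine {\NM} execution: $\mathrm{Setup}$, the coin $\beta$, the oracle's $\mathrm{Vote}(pk,v_\beta,k)$, $\mathrm{Partial\text{-}Tally}$ and $\mathrm{Recover}$ are all performed identically, so $\Pr[g=\beta]$ is unchanged. It then remains to compare the winning predicates. Since $\mathcal{A}$ issues exactly one oracle call, $L=\{(b,v_0,v_1)\}$, and for this $L$ the condition $b\notin\mathfrak{bb}$ forces $balanced(\mathfrak{bb},V,L)$, because every challenge-ballot count on both sides is then $0$ for each $v\in V$. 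Hence any run in which $\mathcal{B}$ wins {\NM} (i.e.\ $g=\beta \wedge b\notin\mathfrak{bb} \wedge v_0,v_1\in V$) is a run in which $\mathcal{A}$ wins {\BS} (i.e.\ $g=\beta \wedge balanced(\mathfrak{bb},V,L) \wedge v_0,v_1\in V$), and additional ways for $\mathcal{A}$ to satisfy the {\BS} predicate only help it; therefore $\textnormal{Succ}(\text{\BS}(\Gamma,\mathcal{A},V,k)) \ge \textnormal{Succ}(\text{\NM}(\Gamma,\mathcal{B},V,k))$. If the right-hand side exceeds $\tfrac12+\textnormal{negl}(k)$ infinitely often, so does the left-hand side, contradicting that $\Gamma$ satisfies {\BS}.

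\noindent The main obstacle I anticipate is the tally-interface mismatch: {\BS} hands the adversary the evidence $e$ whereas {\NM} hands it the outcome $\mathfrak{v}$, so the reduction only works because $\mathrm{Recover}$ is a public algorithm depending solely on $(\mathfrak{bb},e,pk)$ --- this point must be stated explicitly and justified from the {\ES} syntax. A secondary, more mechanical point is the bookkeeping of $L$ and the $balanced$ predicate after a single oracle call, and checking that the inequality between success probabilities points in the required direction (it does, since the {\NM}-winning event is contained in the {\BS}-winning event for this $\mathcal{A}$).
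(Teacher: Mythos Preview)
Your proposal is correct and follows essentially the same contrapositive reduction as the paper: a single oracle call simulates the {\NM} challenge ballot, $b\notin\mathfrak{bb}$ yields $balanced$, and the public $\mathrm{Recover}$ bridges the $e$-versus-$\mathfrak{v}$ interface mismatch. If anything, your bookkeeping is slightly tighter than the paper's --- you correctly state the conclusion as an inequality $\textnormal{Succ}(\text{\BS})\ge\textnormal{Succ}(\text{\NM})$ rather than an equality, and you align the adversary calls more cleanly with the two invocations in the {\BS} game.
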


\begin{proof}
\label{proof1}
Suppose $\Gamma$ does not satisfy {\NM}, where $\Gamma = (\mathrm{Setup}, \mathrm{Vote}, \mathrm{Partial-Tally}, \mathrm{Recover})$. There exists a probabilistic polynomial-time adversary $\mathcal{A}$, such that for all candidates in set of candidates $V$ and all negligible functions {\fontfamily{lmss}\selectfont negl}, there exists a security parameter $k$ such that {\fontfamily{lmss}\selectfont Succ(Non Malleability($\Gamma$, $\mathcal{A}$, $V$, $k$)) $>$ $\frac{1}{2} + $ {\fontfamily{lmss}\selectfont negl}($k$)}. We construct an adversary $\mathcal{B}$ against {\BS} \ from $\mathcal{A}$. 
\begin{itemize}
    \item $\mathcal{B}(pk, k)$ computes $v_0, v_1 \gets \mathcal{A}(pk, k)$ and outputs ($v_0, v_1$).
    \item $\mathcal{B}()$ computes $b \gets \mathcal{O}(v_0, v_1)$; $\mathfrak{bb} \gets \mathcal{A}(b)$ and outputs $\mathfrak{bb}$.
   \item $\mathcal{B}(\mathfrak{bb}, e)$ computes $\mathfrak{v} \gets \mathrm{Recover}(\mathfrak{bb}, e, pk)$; $g \gets \mathcal{A}(\mathfrak{v})$ and outputs $g$.

\end{itemize}

Adversary $\mathcal{B}$ simulates $\mathcal{A}$'s challenger to $\mathcal{A}$. The bulletin board is constructed by $\mathcal{A}$ in {\NM}, meaning it does not contain any challenge ballots. This implies that the bulletin board is balanced, as the set of challenge ballots for $v_0$ and the set of challenge ballots for $v_1$ are both empty. Furthermore, the adversary in {\BS} takes election scheme $\Gamma$ as an input, meaning that it has access to algorithm $\mathrm{Recover}$ and has sufficient information to obtain $\mathfrak{v}$, the election outcome, from variables $\mathfrak{bb}$ and $e$. Thus, $\mathcal{B}$ can obtain $g$ from $\mathcal{A}(\mathfrak{v})$. 
Therefore, adversary $\mathcal{B}$ has the necessary conditions to win \BS. It follows that {\fontfamily{lmss}\selectfont Succ(Non Malleability($\Gamma$, $\mathcal{A}$, $V$, $k$))} = {\fontfamily{lmss}\selectfont Succ(Ballot Secrecy($\Gamma$, $\mathcal{A}$, $V$, $k$))} and {\fontfamily{lmss}\selectfont Succ(Ballot Secrecy($\Gamma$, $\mathcal{A}$, $V$, $k$)) $>$ $\frac{1}{2} + $ {\fontfamily{lmss}\selectfont negl}($k$)}, concluding our proof.
\end{proof}

\newpage

\section{Practical Application of Results: Helios Case Study}
\subsection{A brief primer: homomorphic encryption schemes}

In malleable encryption schemes, a ciphertext can be transformed to another ciphertext, that when decrypted, produces a meaningfully related plaintext to that of the original one. Often this is seen as an undesirable property as it allows attackers that can intercept ciphertexts to change the meaning of the intended message by computing a different ciphertext. We can also appreciate how in the context of voting schemes, this property can be exploited to compromise voter privacy, as shown in our proof above. 

However, some encryption schemes are malleable by design. Homomorphic encryption allows administrators to perform computations on encrypted data without having to decrypt it first. This is useful for privacy preserving applications, where mathematical operations can be done on encrypted data without having to decrypt it. Because of this, homomorphic encryption schemes are sometimes used in electronic voting. They allow for the tallying of encrypted ballots without ever having to decrypt any of them individually.

Helios is a well known example of an E2E verifiable, open source, web-based electronic voting scheme that uses homomorphic encryption. It has been used since 2009 by KU Leuven to elect their president (\cite{adida2009electing}) and by Princeton to elect their student representatives and since 2010 by the International Association for Cryptographic Research to elect board members (\cite{haber2010helios}). Remarkably, more than 2 million votes have been cast using Helios.

This voting system uses a variant of El-Gamal encryption that is additively homomorphic (\cite{adida2008helios}). A basic example of an additively homomorphic scheme would unfold as follows: 

Aida encrypts a message $m_1$, computes $c_1 \gets E(m_1;r_1)$ where $r_1$ is a random coin, and outputs ciphertext of the message $c_1$. Ben encrypts a message $m_2$, computes $c_2 \gets E(m_2; r_2)$, where $r_2$ is another coin and outputs ciphertext of the message $c_2$. To find the sum of both messages, Freddie computes $c_3 = c_1 \oplus c_2$ and outputs $c_3$. Given the system is additively homomorphic, $c_3 = E(m_1 + m_2; r_1 \otimes r_2)$ holds. This allows Freddie to find the sum of both messages by decrypting $c_3$  without knowing what Aida and Ben's individual messages were.  In summary:
$E(m_1) + E(m_2) = E(m_1 + m_2)$
holds for an additively homomorphic encryption scheme.

Attacks against Helios exploiting the malleability property have been documented in \cite{smythcortier2011} and show how ballot secrecy can be compromised. Indeed, if the attacker has access to the ballot collection stage, they can intercept a ballot, compute a meaningfully related ballot and cast that one to the bulletin board they compute. From the tally of the bulletin board, they can deduce what the original ballot contained, without ever having to decrypt it.

\subsection{Exploiting our proof}
Knowing that Helios has malleable ballots, given that it uses homomorphic encryption, we know that an adversary $\mathcal{A}$ can be constructed against {\NM} to win the game. If our theorem holds true, then we should be able to construct an adversary $\mathcal{B}$ against {\BS} from $\mathcal{A}$ that simulates $\mathcal{A}$'s challenger to win {\BS}.

To test this, we first write a Python script for {\NM} and for {\BS} games. The scripts run the game with a dummy voting system that provides no encryption, against an adversary with no strategy, that always guesses $\beta = 1$. Naturally, the adversary guesses the correct bit 50\% of the time, as $\beta$ can be either 0 or 1. 

Next, we extract from Helios the four algorithms that the voting scheme must perform according to our definition presented in \ref{es}: ($\mathrm{Setup, Vote, Partial-Tally, Recover}$). Helios is open source, so the backend encryption code is publicly available at: \url{https://github.com/benadida/helios}.

Subsequently, we replace the dummy voting scheme class with the Helios class and we construct an adversary $\mathcal{A}$ against {\NM}. The adversary receives a challenge ballot $b$ and exploits the malleability property of Helios to compute a challenge ballot $b'$. 

The Helios ballots are comprised of ciphertexts encrypting the voter's vote, paired with a series of zero knowledge proofs (ZKPs) that demonstrate correctness of ballot formation. There are two sets of ZKPs: individual proofs and an overall proof, and each ZKP is comprised of a challenge, commitment and response value. Individual proofs demonstrate that the ciphertext encrypts a value between 0 and the maximum number of options a voter can select, for all the possible candidate choices. Overall proofs that show that the ballot contains a total number of votes that is valid. Each ballot $b$ is verified before being cast. Upon receiving $b$, the adversary alters the response value of individual proofs by adding $q$. \footnote{In El-Gamal encryption, a large prime $p$ is selected (2048 bits) such that $p = 2q + 1$ where $q$ is also a prime. $p$ and $q$ are components of the public key and used in the encryption operation. For further details on El-Gamal refer to \cite{Elgamal85apublic}} The adversary can readily access $q$ as it is part of the public key, which is given to them in line 3 of {\NM}. 

The proof of correct ballot formation will still hold with the new response values because the verification is carried out using modulo arithmetic. The new ballot $b'$ is distinct to the original challenge ballot $b$, as the individual proofs are different, but the vote that the ballot encrypts will remain unchanged. The adversary can therefore add $b'$ to the bulletin board it without breaking the $b \notin \mathfrak{bb}$ condition. 

The bulletin board now contains adversarial ballots and ballot $b'$. The administrator computes the partial tally of the bulletin board and returns the election outcome to the adversary. The adversary knows what vote all of the ballots in the bulletin board contained except that of $b'$, but deducing that vote is now trivial: the adversary simply computes the difference between the sum of their adversarial ballots and the election outcome. The result reveals what vote $b'$ contained. 

Therefore, we show that there exists a probabilistic polynomial-time adversary $\mathcal{A}$ that can win {\NM} with {\fontfamily{lmss}\selectfont Succ(Non-Malleability($\Gamma$, $\mathcal{A}$, $V$, $k$)) $>$ $\frac{1}{2} + $ {\fontfamily{lmss}\selectfont negl}($k$)} and as such {\NM} does not hold.

Finally, we construct an adversary $\mathcal{B}$ against {\BS} from adversary $\mathcal{A}$ that simulates $\mathcal{A}$'s challenger. For each adversary call in {\BS}, $\mathcal{B}$ performs the exact operations detailed in \ref{proof1}.

$\mathcal{B}$ is able to win {\BS} following these steps with {\fontfamily{lmss}\selectfont Succ(Ballot Secrecy($\Gamma$, $\mathcal{B}$, $V$, $k$) $>$ $\frac{1}{2} + $ {\fontfamily{lmss}\selectfont negl}($k$))}, thus showing that {\BS} does not hold.

The code is available on github at \url{https://github.com/aidamanzano/masters}.

\chapter{Discussion}
\label{discussion}

Our work presents formalised definitions for the ones presented in \cite{votetech}. From these we can quantitatively analyse whether a voting scheme satisfies them or not. Following the research that has previously been conducted in the field, we present a theorem that claims that a relationship between ballot secrecy and ballot independence holds, namely that {\BS} implies {\NM}. We mathematically prove this theorem thereby showing that a voting scheme with malleable ballots cannot preserve ballot secrecy.

The implications of this proof mean that voting systems that use malleable encryption methods are vulnerable to attacks compromising voters' privacy. From this proof we construct a Python framework to test if attacks against {\NM} can be exploited to compromise {\BS} in a real life scenario. Helios is a widely used voting system that has malleable ballots, and we show that an adversary can successfully exploit this property against {\BS}.

Whilst attacks against Helios are known of (we implement the one described in \cite{smythcortier2011}), our work's contribution extends beyond this. Our Python framework  demonstrates practically how an attack against {\NM} can be transformed to an attack against {\BS}. This facilitates testing attacks in voting systems across different privacy definitions. We hope that this contribution can streamline and aid the design and cryptoanalysis of future voting schemes by considering privacy in the design process bottom up as opposed to top down. 

Testing Helios in our Python framework was particularly challenging because Helios was not designed to be tested against {\BS} or {\NM}, it was designed to work. It is the job of cryptoanalysts to detect vulnerabilities once the finished product is deployed, but we disagree with this workflow and hope that testing future voting schemes is easier if they are designed such that the algorithms $\mathrm{Setup, Vote, Partial-Tally, Recover}$ are easily abstractable from the system.

Future lines of work include proving the the following remark holds true:
\begin{remark}[ \NM \ $\Rightarrow$ \BS]
\textit{Given an election scheme} $\Gamma$ \textit{satisfying} {\NM},  {\BS} \ is satisfied, where $\Gamma = (\mathrm{Setup}, \mathrm{Vote}, \mathrm{Partial-Tally}, \mathrm{Recover})$.
\end{remark}
Doing so would mathematically demonstrate that {\BS} and {\NM} are equivalent, thus significantly reducing the burden of proofs for {\BS}, as one property would suffice to achieve the other. \cite{smyth2021} achieved this for earlier versions of the presented definitions, albeit these did not generalise to a larger class of voting schemes.

We also note that in formalising \cite{votetech}'s definitions, a degree of generality will inherently be lost. \cite{votetech} presents informal definitions with the purpose of keeping them general such that they can apply to a wider class of voting systems. Analysis of such privacy notions calls for mathematical rigour that will inevitably constrain the voting scheme definitions. Our definitions of election schemes however, are kept purposefully as broad as possible. Future work can focus on relaxing these definitions, particularly by allowing for a simpler key generation output (symmetric encryption uses one key only), and updating algorithms $\mathrm{Vote, Partial-Tally, Recover}$ accordingly. 

Further work can also be carried out to consider distributed tallying. Both \cite{smyth2021}'s and our definitions do not take into account when an adversary has the power to subvert the tallying operation, thus placing a trust assumption on the benign behaviour of the tallying authority. Our voting scheme definition performs a partial tally operation, thus we can elaborate on this by constructing games for ballot secrecy and ballot independence that give the adversary access to the tallying operation, and tasks them with winning the games. We assume there exists at least one honest tallier, and if the adversary cannot win, we know ballot secrecy or ballot independence will hold despite the adversary subverting the tallying operation.

This research avenue, while complex, can be particularly interesting in the context of consensus mechanisms in Distributed Ledger Technologies (DLTs) that are fair, as a means to replace Proof of Work or Proof of Stake. These consensus mechanisms are resource intensive and inherently unfair (\cite{vukolic2015quest}, \cite{guerraoui2018unfairness}) and could instead be replaced with a voting scheme as a way to fairly approve transactions. Indeed, DLTs such as the IOTA foundation are already implementing voting schemes as a consensus mechanism (\cite{muller2020fast}). Provided the voting scheme has a low computational complexity, this could be introduced in DLTs where the tallying is decentralised and with newer definitions of {\BS} and {\NM}, our framework can be used to test if the voting scheme has provable guarantees of voter privacy.

It is important to note that the work we present has the purpose of enabling transparency of the privacy properties of voting schemes, but as briefly discussed in section \ref{intro}, privacy does not imply verifiability nor vice versa. In fact, verifiability ensures fairness whereas privacy ensures freedom, and achieving both simultaneously is challenging when they can be seemingly contradictory. A voting scheme can preserve ballot secrecy but output random election outcomes, there are no conditions on the voting scheme being \textit{sound} or \textit{complete} for it to satisfy {\BS} or {\NM}. Conversely, as observed in our case study, a voting scheme may be verifiable yet not preserve {\BS} or {\NM}, such as Helios. 

Verifiability and privacy notions are challenging. Agreeing upon a universal definition of ballot secrecy or ballot independence is not straightforward, but we hope to get one step closer to that goal. Most importantly, we highlight that both verifiability and privacy are necessary conditions in a fair and free democratic election process.
\chapter{Conclusion}

Numerous definitions for ballot secrecy exist in academia. Some have been demonstrated to be too weak and fail to identify privacy breaches, and others are too strong. These are further discussed in Chapter \ref{litreview}. \cite{smyth2021} presents a definition of ballot secrecy that does not rely on the trust assumption that ballots are recorded as cast. The definition assumes that an adversary exists that controls the ballot collection process. He demonstrates that this definition is equivalent to his definition of non-malleable ballots. Improving upon earlier work, Smyth constructs a generalised definition of ballot secrecy and non-malleability in a private draft (\cite{votetech}), that is used in this report.

His newer definitions generalise to a larger class of voting systems, and are not exclusive to first-past-the-post. They are appealing for a general audience, but lack the formality required for rigorous proofs. We take the definitions of ballot secrecy and ballot independence (non-malleability) he presents and formalise them into algorithmic definitions, following the game-based models of cryptography. Subsequently we mathematically prove that ballot secrecy implies ballot independence.

Using the results of this proof we construct an adversary against the voting scheme Helios in Python and show that a winning adversary against ballot independence suffices to compromise ballot secrecy.  The Python framework allows us to test any voting scheme that satisfies the definition in  \ref{es} by running the games against said voting scheme and verifying if an adversary can win.
We show that if a winning adversary $\mathcal{A}$ against the {\NM} Python game exists, then an adversary $\mathcal{B}$ against {\BS} can win the game by simulating $\mathcal{A}$'s challenger. \textbf{This result shows that a voting system with malleable ballots does not preserve ballot secrecy.}

\printbibliography

\end{document}